\renewcommand\footnotetextcopyrightpermission[1]{} % removes footnote with conference information in first column
\begin{document}
\title{Distributed Training Large-Scale Deep Architectures}

 \author{Shang-Xuan Zou}
 \affiliation{
   \institution{HTC AI Research}
   \city{Taipei} 
   \state{Taiwan} 
 }
 \email{mina\_zou@htc.com}
 
 \author{Chun-Yen Chen}
 \affiliation{%
   \institution{HTC AI Research}
   \city{Taipei} 
   \state{Taiwan} 
 }
 \email{arbit\_chen@htc.com}
 
 \author{Jui-Lin Wu}
 \affiliation{%
   \institution{HTC AI Research}
   \city{Taipei} 
   \state{Taiwan} 
 }
 \email{ruilin\_wu@htc.com}
 
 \author{Chun-Nan Chou}
 \affiliation{
   \institution{HTC AI Research}
   \city{Taipei} 
   \state{Taiwan} 
 }
 \email{Jason.CN\_Chou@htc.com}
 
 \author{Chia-Chin Tsao}
 \affiliation{
   \institution{HTC AI Research}
   \city{Taipei} 
   \state{Taiwan} 
 }
 \email{dustin\_tsao@htc.com}
 
 \author{Kuan-Chieh Tung}
 \affiliation{
   \institution{HTC AI Research}
   \city{Taipei} 
   \state{Taiwan} 
 }
 \email{jacky\_tung@htc.com}
 
 \author{Ting-Wei Lin}
 \affiliation{%
   \institution{HTC AI Research}
   \city{Taipei} 
   \state{Taiwan} 
 }
 \email{beck\_lin@htc.com}
 
 \author{Cheng-Lung Sung}
 \affiliation{
   \institution{HTC AI Research}
   \city{Taipei} 
   \state{Taiwan} 
 }
 \email{cl\_sung@htc.com}
 
 \author{Edward Y. Chang}
 \affiliation{
   \institution{HTC AI Research}
   \state{USA} 
 }
 \email{edward\_chang@htc.com}
 
\begin{abstract}
Scale of data and scale of computation infrastructures together enable the current deep learning renaissance.
However, training large-scale deep architectures demands both algorithmic improvement and careful system configuration.
In this paper, we focus on employing the system approach to speed up large-scale training.
Via lessons learned from our routine benchmarking effort,
we first identify bottlenecks and overheads that hinter data parallelism. 
We then devise guidelines that help practitioners to configure an effective system and fine-tune parameters to achieve desired speedup. 
Specifically, we develop a procedure for setting mini-batch size and choosing computation algorithms.  We also derive lemmas for determining the quantity of key components such as the number of GPUs and parameter servers. Experiments and examples show that these guidelines help effectively speed up large-scale deep learning training.
\end{abstract}

%
% The code below should be generated by the tool at
% http://dl.acm.org/ccs.cfm
% Please copy and paste the code instead of the example below. 
%
% ANSWER:
% CCS --> Computer methodologies --> Machine learning --> Learning settings --> batch learning
% CCS --> Computer methodologies --> Machine learning --> Machine learning approaches --> neural networks
% CCS --> Computer methodologies --> Parallel computing methodologies
% CCS --> Computer methodologies --> Distributed computing methodologies
% CCS --> Computer systems organization --> Architectures (not so sure)
% not find the topic about Speed up & Performance tuning

\begin{CCSXML}
<ccs2012>
<concept>
<concept_id>10010147.10010257.10010282.10010283</concept_id>
<concept_desc>Computing methodologies~Batch learning</concept_desc>
<concept_significance>500</concept_significance>
</concept>
<concept>
<concept_id>10010147.10010257.10010293.10010294</concept_id>
<concept_desc>Computing methodologies~Neural networks</concept_desc>
<concept_significance>500</concept_significance>
</concept>
<concept>
<concept_id>10010147.10010169</concept_id>
<concept_desc>Computing methodologies~Parallel computing methodologies</concept_desc>
<concept_significance>500</concept_significance>
</concept>
<concept>
<concept_id>10010147.10010919</concept_id>
<concept_desc>Computing methodologies~Distributed computing methodologies</concept_desc>
<concept_significance>500</concept_significance>
</concept>
</ccs2012>
\end{CCSXML}

\ccsdesc[500]{Computing methodologies~Batch learning}
\ccsdesc[500]{Computing methodologies~Neural networks}
\ccsdesc[500]{Computing methodologies~Parallel computing methodologies}
\ccsdesc[500]{Computing methodologies~Distributed computing methodologies}

% We no longer use \terms command
%\terms{Theory}

\keywords{Deep learning, Neural network, Convolutional neural networks, Distributed learning, Speedup, Performance tuning}

\maketitle

\section{Introduction} \label{sec-intro}
In the last five years, neural networks and deep architectures have been proven very effective in application areas such as computer vision, speech recognition, and machine translation.
The recent breakthroughs of AlphaGo further cement interest in employing deep architectures to develop intelligent machines. 
Although deep architectures such as convolutional neural networks (CNNs)~\cite{LeCun:1998ek, NIPS2012_alexnet, Graves:2013wt}, recurrent neural networks (RNNs)~\cite{Zaremba:2014up, Graves:2014vz}, and restricted Boltzman machines (RBMs)~\cite{fischer2012introduction, Krizhevsky:2010va} have been around since the 1980s, they have never been under the spotlight. Why are they thriving now? The convincing factor this time around is \emph{scale}, in both data volume and computation resources.

When the scale of training data is small, all supervised learning algorithms (e.g., decision trees, support vector machines, and logistic regression) achieve the same level of classification accuracy. In 2012, AlexNet~\cite{NIPS2012_alexnet} demonstrated that with millions of training images from ImageNet~\cite{imagenet_cvpr09}, CNNs substantially outperform all prior works on image classification. Since then it has been shown in several vertical domains that large training datasets can improve the accuracy of classification tasks.

Since the computation complexity of a deep learning algorithm is high (e.g., the convolution stage of CNNs requires a six-level nested loop), the scale of data demands scalable computation resources. Fortunately, processor speed has soared more than one thousand folds in the last three decades. In addition, with specialized arrays of processors (e.g., GPUs) and accessibility of parallel computing infrastructures via the cloud, millions of cores can be utilized simultaneously for training. However, scaling up computation is not merely throwing in an infinite number of cores.
As Amdahl's law \cite{amdahl1967validity} states, the non-parallelizable portion of a computation task such as communication, I/O, and interprocess synchronization may cap computation speedup.
For instance, if the non-parallelizable portion is $50\%$, reducing computation time to zero achieves only a speedup factor of two. All deep learning frameworks involve substantial non-parallelizable overheads, which must be carefully mitigated to speed up training time.

Several open-source projects (e.g., Caffe \cite{caffe2014}, MXNet \cite{chen2015mxnet}, TensorFlow \cite{abadi2015tensorflow}, and Torch \cite{collobert2011torch7}) have been devoted to speeding up training deep networks.  They can be summarized into two approaches: deep-learning algorithm optimization and algorithm parallelization (details of related work are presented in Section~\ref{related work}). The former includes using better convolution algorithms, improving stochastic gradient decent (SGD) with faster methods, employing compression/quantization, and tuning the learning rate with advanced optimization techniques. Indeed, most open-source libraries have quickly adopted available state-of-the-art optimizations. 
However, most users in academia and industry do not know how to set parameters, algorithmic and system, to conduct cost-effective training.  Researchers and professionals face at least the following questions in three levels, which are intra-GPU, inter-GPU, and inter-machine:

\begin{enumerate}
\item What is the bottleneck of speeding up deep learning training by parallelism?
\item With $X$ amount of data, what is the size of each mini-batch ($X_{mini}$) and how to maximize GPU utilization?
\item How many GPUs ($G$) should be employed, and how should such a system be configured?
\item How many parameter servers ($N_{ps}$) should be deployed when building a distributed system?
\end{enumerate}

In this work, we aim to answer the above questions by providing system configuration guidelines given the characteristics of the training data (the number of training instances and the size of each training instance), as well as 
hardware parameters (such as GPU memory size, internal transmission bandwidth, e.g. bus bandwidth, and external transmission bandwidth, e.g. network bandwidth). 
We identify computation bottlenecks and I/O overheads of representative frameworks. From the insights we observed in benchmarking, 
we propose guidelines allowing users to configure a high-performance deep learning system for their target tasks. 

\subsection{Related Work}
\label{related work}
\begin{figure*}[!ht]
\centering
\includegraphics[width=0.9\textwidth]{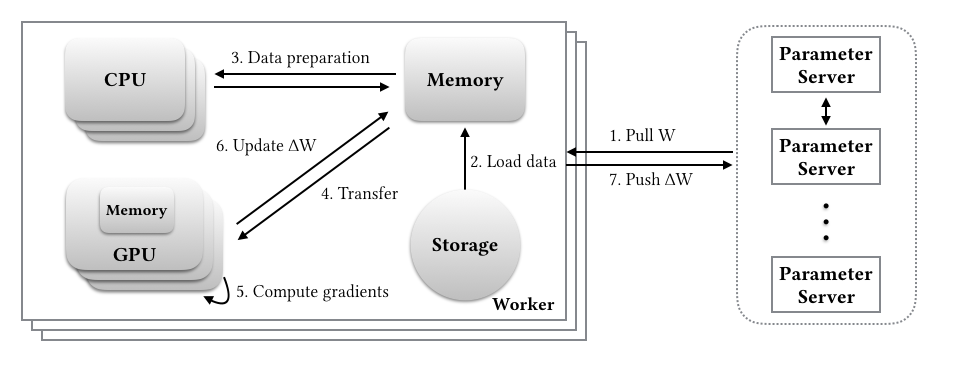}
\vspace{-2em}
\caption{Deep learning system architecture.
The batch processing pipeline in the general training process can be divided into seven steps. Each of them has its own factors that influence training performance.}
\vspace{-1em}
\label{fig:sec3_node_arch}
\end{figure*}

Since deep-learning training is time-consuming, many previous studies devoted to speeding up the training performance.
These prior contributions can be divided into two approaches: algorithmic and system.  The algorithmic approach accelerates the training algorithm, whereas the system approach focuses on employing improved resources to achieve parallel training. 
To ensure scalability, the system approach may require enhancing the training algorithm to take full advantage of the increased resources.

\subsubsection{Algorithmic Approach}

Stochastic gradient descent (SGD) is the de facto optimization algorithm for training a deep architecture. Many SGD techniques have been developed for achieving faster convergence to the global minimum. The settings of hyper-parameters such as learning rate and mini-batch size are crucial to the training performance.
Hinton and Bengio~\cite{hinton2010practical, bengio2012practical} provide recommendations on setting hyper-parameters commonly used in gradient-based training. 
Batch renormalization can be an effective strategy to train a network with small or non-i.i.d mini-batches~\cite{Ioffe:2017uc}.
Momentum-based acceleration schemes increase the speed of learning and damp oscillations in directions of high curvature~\cite{polyak1964some}. 
Per-parameter adaptive learning rate methods help reduce large gradients and decrease the learning rate over time~\cite{duchi2011adaptive}. 

More efficient algorithms can improve speed.
The execution time of convolution consumes $70\%$ to $90\%$ of CNN-based training. Some FFT-based convolution schemes were proposed~\cite{Mathieu:2013wa} to achieve speedup. Additionally, Firas \emph{et al.} proposed three matrix layout schemes using lowering operations~\cite{Hadjis:2015wx}. \emph{Caffe con Troll} implements a CPU-GPU hybrid system that contains several lowering operations, and at the same time, employs a simple automatic optimizer to select the best lowering. Some compression algorithms~\cite{elgohary2016compressed} are developed for both good compression ratios and fast decompression speed to enable block-wise uncompressed operations, such as matrix multiplication are executed directly on the compressed representations. 
 
\subsubsection{System Approach}

A deep learning training job consists of two computationally intensive arithmetic operations: matrix multiplication and convolution. A GPU is well-suited for speeding up such operations since these operations are easy to be parallelized. To achieve further speedup, the next logical step is to employ multiple GPUs, and to configure a distributed clusters of CPUs and GPUs. The computation time can be largely reduced via data parallelism and/or model parallelism. Many projects have proven parallelism to be helpful \cite{chilimbi2014project, dean2012large, krizhevsky2014one, niu2011lock, Iandola:2016in,zhang2015poseidon}.

According to Amdahl's law, the peak performance of a parallel architecture is capped by the overhead portion of the computation task. In the context of deep learning, its training overhead includes synchronization between distributed threads, disk I/O, communication I/O, and memory access. 
To reduce synchronization delay, Zinkevich \emph{et al.} \cite{zinkevich2010parallelized} proposed an asynchronous distributed SGD algorithm to guarantee parallel acceleration without tight latency constraints. Chen \emph{et al.} \cite{chen2016revisiting} proposed adding backup workers in synchronous SGD algorithm to mitigate the bottleneck. To reduce the impact of I/O on the overall speedup, most open-source frameworks (see Section 1.1.3) attempt to conceal I/O behind computation via the pipeline approach proposed in \cite{PLDA+}.  Such approach requires a computation unit to be sufficiently long so as to hide I/O overheads as much as possible. The pipeline approach, however, demands carefully setting up the unit size of computation (or mini-batch size) and the number of parameter servers. We will propose how to best estimate these configuration parameters in Section \ref{sec:conf_perf}.

\subsubsection{Computation Frameworks}

There have been several deep learning open-source efforts. Here, we introduce representative frameworks\footnotemark:

\footnotetext {Due to limited information available, some frameworks, such as CNTK from Microsoft \cite{Dally:18pvCAH2} and Theano \cite{James:0hLkbVip}, are not covered.}  

\begin{itemize}

\item \textbf{Caffe}: Caffe \cite{caffe2014} is maintained and developed by the Berkeley Vision and Learning Center (BVLC) and has become open-source since 2014. Caffe was first designed for vision, and has been adopted and improved by users in several domain, such as speech recognition and robotics. In Caffe, some extensible toolkits are provided for state-of-the-art deep learning algorithms. Caffe separates network representation from actual implementation, and supports seamless switching between open-source platforms.

\item \textbf{MXNet}: MXNet \cite{chen2015mxnet} is designed for portability (i.e., supporting multiple languages and operating systems), scalability (i.e., running on multiple machines, GPUs and CPUs), and efficiency. Additionally, MXNet provides a superset programming interface to be compatible with other frameworks. MXNet is lightweight and it enjoys multiple programming language supports, e.g., Python, R, Julia and Scala.

\item \textbf{TensorFlow}: TensorFlow \cite{abadi2015tensorflow}, which supports distributed computation, is an open-source framework developed by Google. TensorFlow's design philosophy is flexibility, portability, and high efficiency. TensorFlow takes computations described by using a dataflow model and maps them onto a wide variety of hardware platforms. TensorFlow allows clients to easily express various kinds of parallelism through replication and parallel execution of a core model dataflow graph, with many different computational devices all collaborating to update a set of shared parameters or states.

\item \textbf{Torch}: Torch~\cite{collobert2011torch7} is designed to be easy for developing and extending numerical algorithms. Based on this philosophy, Torch leverages \emph{Lua} language, a fast interpreted language (with also the fastest Just In Time (JIT) compiler), to embedded in a C application and provides APIs in C, making library wrapping easily for the unifying interface to C/C++.
\end{itemize}
\vspace{0.5em}

Among the introduced frameworks, MXNet and TensorFlow are built-in distributed training frameworks. Users can easily develop algorithms running on computing clusters with thousands of CPUs or GPUs. Several works are proposed to give users a glimpse on the factors that they must take into consideration. Bahrampour \emph{et al.} \cite{Bahrampour2015uv} provide a comparative study on different frameworks with respect to extensibility, hardware utilization, and performance. Shi \emph{et al.}~\cite{shi2016} provides performance study on selected frameworks. These works offer practitioners a high-level guideline to select an appropriate framework. 
Given a selected framework, our work aims to provide further configuration guidelines to make training both fast and cost-effective.

\subsection{Contribution Summary}

In summary, this work makes the following contributions:
\begin{enumerate}
\item {\em Identifying computation bottlenecks and devising their remedies}.
We benchmark representative networks and datasets to identify the typical bottlenecks of large-scale training. 
We then devise remedies to reduce or mask computation overheads (I/O and communication) to improve training speed. 
\item {\em Quantifying remedies into an optimization model}.  We formulate our remedies into an optimization model to determine the optimal mini-batch size and carefully balance memory and speed tradeoffs so as to employ the fastest algorithms given the memory constraint. 
\item {\em Recommending distributed configuration involving multiple GPUs and parameter servers}.
When the workload cannot be handled by a single GPU or machine, we propose lemmas to recommend the number of GPUs and parameter servers to configure so as to achieve cost-effective speedup.
\end{enumerate}
Both real-world deployment and empirical studies attest our remedies to be very effective.

\section{Preliminaries}
This section presents a typical deep learning training process including performance factors and their relevant parameters. 
We then show the setup of the evaluation environment.

\subsection{Deep Learning Training Process}
\label{ssec-training-process}

Figure~\ref{fig:sec3_node_arch} depicts a general architecture of deep-learning training and data flow.   
A local architecture is basically a commodity computer equipped with $G$ GPUs.
When aiming to improve parallelism via a distributed architecture, a worker and a parameter server can be replicated into multiple copies connected by a network. 
The mini-batch processing pipeline in the training process consists of seven steps. 
After the model parameters $W$ and the data processing pipeline is initialized, the training process repeats until all training data is seen.

\begin{enumerate}
\item {\em Parameter refresh}. In distributed training, the latest copy of model parameters $W$ is pulled from parameter servers at the beginning of each mini-batch processing. $W$ is then loaded onto GPU memory. A distributed environment consists of $N_{w}$ workers and $N_{ps}$ parameter servers for managing shared parameters.
\item {\em Data loading}. A subset of the $X$ training instances called {\em mini-batch} of size $X_{mini}$ is loaded from the persistent storage to the main memory.  
\item {\em Data preparation}. $X_{mini}$ instances are transformed into the required input format. These instances may be augmented to mitigate the over-fitting problem and enrich sample diversity. 
\item {\em Host to GPU transfer}. The mini-batch 
is loaded onto the memory of a GPU. If $G$ GPUs are employed, $G$ different mini-batches are loaded onto $G$ GPUs.
\item {\em GPU processing}. Required computations including matrix multiplication and convolution are performed on $G$ GPUs for the gradients against the given mini-batch. 
\item {\em Parameter update}. The delta $\Delta W$ is derived from the gradients and applied to the previous version of $W$ in main or GPU memory.  
\item {\em Distributed update}. The parameter updates are sent to parameter servers when distributed machines are configured. 
\end{enumerate}
\vspace{-0.3em}
Among the seven steps, step $5$ performs computation, and the other steps that cannot be hidden behind step $5$ are considered as overheads.  The larger fraction of the time which those overhead steps take, the less effective parallelism can achieve.
Therefore, our tasks are minimizing overhead time and hiding overheads via pipelining as much as possible.
The remainder of this paper is to demonstrate how the following parameters can be carefully tuned to achieve such goals, organized into four sections. 
In section~\ref{ssec:singlegpu}, we provide a procedure to recommend a mini-batch size that leads to maximum training performance. 
Section~\ref{ssec:multigpu} provides an in-depth analysis on training in a multi-GPU environment.
We provide a lemma to estimate the number of GPUs $G$ for a desired factor of speedup.
The increase of GPU number not only improves performance speedup, but also induces communication overheads between GPUs. We'll also discuss how to alleviate the impacts of these overheads.
In section~\ref{ssec:multihost}, we address issues involving distributed workers.
When the training system scales horizontally, we need an extra cluster to manage the parameters in addition to training hosts in the distributed environment.
The communication between training hosts and parameter servers is an overhead that could seriously degrade training speedup. 
We propose a scheme to estimate the number of parameter servers $N_{ps}$ given network bandwidth $B_{ps}$.

\subsection{Evaluation Environment}
\label{benchmark_environment}

We set up our evaluation environment with Elastic Compute Cloud (EC2) of Amazon Web Services (AWS)\footnotemark. \footnotetext{ GPU instances on Google Compute Engine (GCE) do not support GPU peer-to-peer access, and hence we will defer our GCE experiments till such support is available.}
All experiments run on EC2 P2 instances equipped with NVIDIA Tesla K80 Accelerators which contain a pair of NVIDIA GK210 GPUs.
Each GPU provides $12$ GB memory and $2,496$ parallel processing cores. 
The CPU is a customized version of Intel Broadwell processor running at $2.7$ GHz.
Table~\ref{p2instance} shows hardware configurations of P2 type instances\footnotemark.
\footnotetext{p2.16xlarge is not used in our experiments because it does not support full GPU-to-GPU communication which introduces one more variable in our multi-GPU experiments.}
To avoid unexpected GPU clock rate adjustment in our experiments, we disable GPU autoboost function. 

\begin{table}[ht]
\centering
\caption{AWS P2 instances}
\label{p2instance}
\begin{tabular}{|l|c|c|c|c|}
\hline
 Instance   & \#GPU  & 	GPU Mem.  & 	Network  	\\ \hline 
p2.xlarge  &	1	&	12 GB 		&	High 		    \\ \hline
p2.8xlarge &	8  	&	96 GB 		&	10 Gbps   	    \\ \hline
p2.16xlarge&	16	& 	192 GB	 	&	20 Gbps  	  	    \\ \hline
\end{tabular}
\end{table}

We perform experiments and demonstrate our ideas by MXNet and TensorFlow.
Virtual machines are launched from Amazon deep learning AMI (Amazon Machine Image) $v2.1$ preloaded with NVIDIA CUDA toolkit $v7.5$ and cuDNN $v5.1$. We conduct experiments on the ILSVRC-2012 dataset, the subset of ImageNet~\cite{imagenet_cvpr09} containing $1,000$ categories and $1.2$ million images on SSD. 
The other set containing $50,000$ labeled images is used as validation data.

\section{Configuration of High Performance Training System}
\label{sec:conf_perf}

We study configuration in three incremental steps, starting from a single GPU, then expanding our benchmarking to multiple GPUs, and finally to distributed nodes where each node consists of multi-GPUs.
Each of these three steps focuses on analyzing one system configuration.  

In the single GPU study, we analyze how the mini-batch size $X_{mini}$ can be decided to achieve fast training speed.
Most prior studies only consider tuning $X_{mini}$ algorithmically, that is, selecting a size that can achieve fast convergence.
However, taking the minimum number of epochs to reach convergence does not directly translate to shortest training time. 
In Section~\ref{ssec:singlegpu} we provide system analysis to determine $X_{mini}$ and solve optimized mini-batch selection with integer linear programming.

As multiple GPUs are employed to conduct training, data moving is the major bottleneck, which caps the speedup performance according to Amdahl's law.
Therefore, to be cost-effective, we should not use more GPUs when speedup improvement has saturated.
Section~\ref{ssec:multigpu} presents a systematic procedure to estimate an effective number of GPUs $G$.

When training is conducted in a distributed environment, we further study communication overhead.
Section~\ref{ssec:multihost} depicts the distributed training process and provides a lemma to estimate the required number of parameter servers in a cost-effective system configuration.

\subsection{Training on single GPU instance}
\label{ssec:singlegpu} 
In this section, we first point out the common performance pitfalls in designing neural networks. We illustrate that the setting of mini-batch size is the primary factor that determines training speed.  We then formulate selecting the mini-batch size $X_{mini}$ as an optimization problem and provide a procedure to solve for $X_{mini}$ that can achieve fastest training speed.

\subsubsection{Identifying System Issues}

Most neural networks are initially designed according to some heuristics. Researchers may not have the full picture about their model's feasibility, convergence quality, and prediction quality unless they conducted some experiments. During the experimental process, various hyper-parameter values may be tested exhaustively by a trial-and-error process. According to our own experience, it is typically unknown at the beginning to know how long it would take to run a round of training job, let alone configure a cost-effective system that can maximize training speed.
A suboptimal system configuration can lead to excessive execution time because of encountering the following issues:

\begin{itemize}
\item {\em Shortage of GPU memory space}. A GPU cannot commence computation without the data, including model parameters, gradients, computation workspace, etc, being loaded into GPU memory.
A neural network designed without system knowledge may require more memory capacity than available memory.  This excessive memory use may cause unnecessary thrashing and prolong training time.

\item {\em Ineffective tradeoff between speed and memory}. Deep learning frameworks may execute operations of a training task by using different algorithms, which have different speed and memory-use trade-offs.  
The selection of using which algorithm is a layer-dependent decision. The selection factors include input data size, layer parameters, mini-batch size, and available GPU memory space.
Consider the convolution operation as an example. An FFT-based algorithm runs faster than a GEMM-based one but it requires more memory. 
The training speed may be degraded when a large $X_{mini}$ exhausts memory capacity in order to run a faster FFT-based algorithm.
Thus, when tuning factors mentioned above, we should consider the impact on memory consumption because the memory budget affects the selection of algorithm.
\end{itemize}

Both training convergence and training speed can be decided by mini-batch size. In other words, selecting a good mini-batch size, one must examine from both the algorithmic and system aspects. From the algorithmic aspect, the mini-batch size is suggested to be larger than the number of output classes and a mini-batch contains at least one sample from each class~\cite{hinton2010practical}. The diversified training data leads to more stable convergence. From the system aspect, a proper mini-batch size helps to improve the parallelism inside GPU and enables the faster implementation of an operator. Based on the suggested mini-batch size considering the algorithmic aspect, we introduce the system aspect into deciding $X_{mini}$.

\subsubsection{Choosing Convolution Algorithms}

There are two time-consuming operations in deep learning: matrix multiplication and convolution.  Parallelizing matrix multiplication is rather straightforward, whereas speeding up convolution involves memory and speed trade-off.
\begin{table}[t]
\centering
\caption{Comparison of memory usage of convolution layers in AlexNet between FFT and GEMM}
\label{memoryconsumption}
\begin{tabular}{|c|c|c|}
\hline
Layer & Parameters & FFT/ \\
      & ($X_{mini}$,$B_{i}$,$H_{i}$,$B_{i+1}$,$H_{i+1}$,$D_{i}$,$D_{i+1}$,$F$) & GEMM\\ \hline
$1st$ & $(128,224,224,55,55,3,96,11)$ & $11.6$x \\ \hline
$2nd$ & $(128,27,27,27,27,96,256,5)$ & $1.6$x \\ \hline
$3rd$ & $(128,13,13,13,13,256,384,3)$ & $2.3$x \\ \hline
$4th$ & $(128,13,13,13,13,384,384,3)$ & $2.7$x \\ \hline
$5th$ & $(128,13,13,13,13,384,256,3)$ & $2.3$x \\ \hline
\end{tabular}
\end{table}
\begin{figure}[!ht]
\centering
\vspace{-1em}
\includegraphics[width=0.45\textwidth]{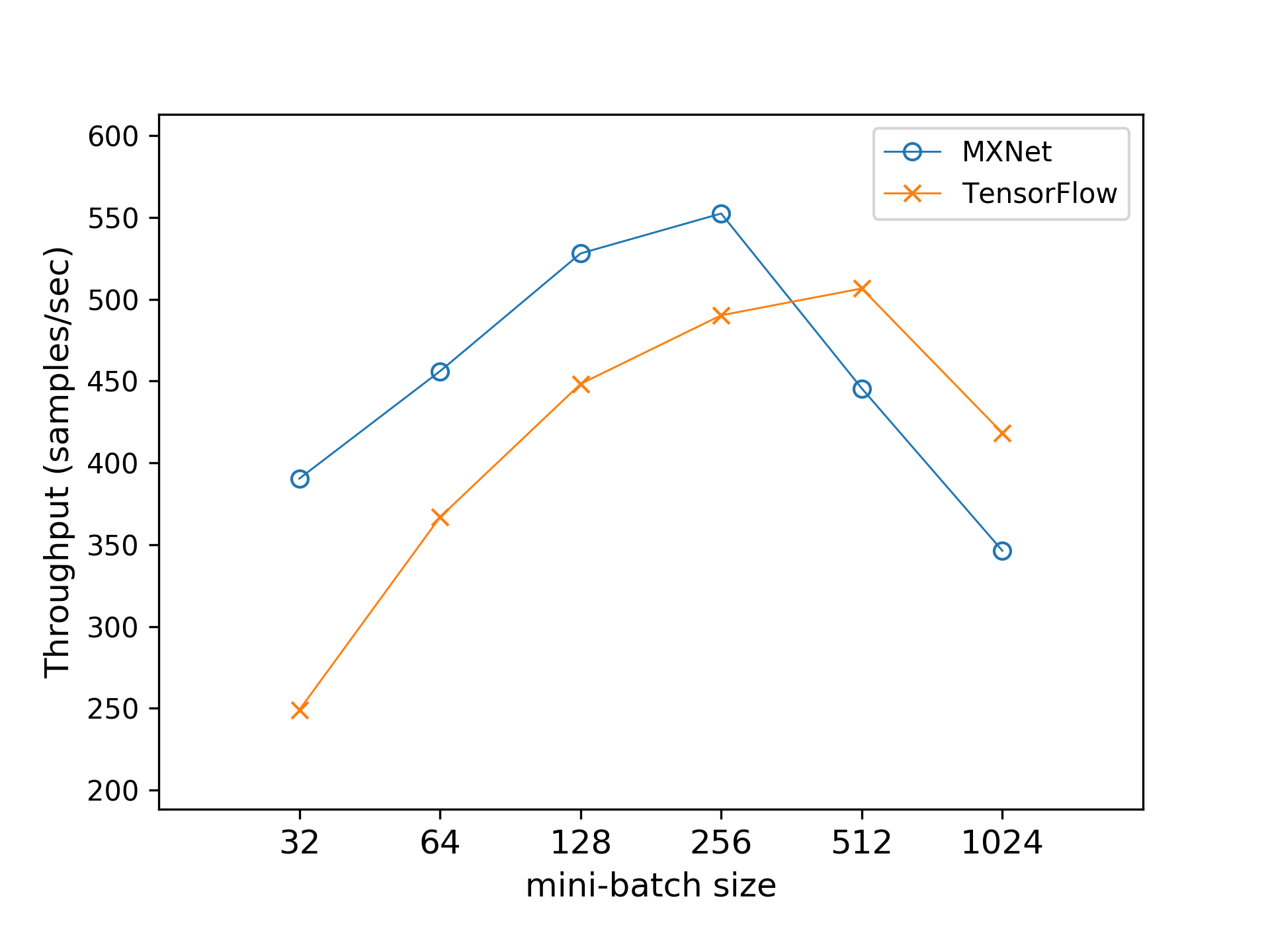}
\vspace{-2em}
\caption{Performance impact of mini-batch size}
\label{fig:exp1_throughtput_mini-batch}
\vspace{-1em}
\end{figure}
Two representative convolution algorithms are GEMM based~\cite{cudnn14} and FFT based~\cite{Mathieu:2013wa}.
GEMM-based algorithms converts convolution to a matrix multiplication, which can be slow but the up side is that it requires less memory space. 
FFT-based algorithms run faster than GEMM-based by using efficient matrix multiplication and reducing the number of floating point operations.
However, FFT-based algorithms demand substantially more memory as the filters are padded to be the same size as the input.
In addition, FFT-based algorithms require extra memory space for feature mapping on domain transformation.   
Table~\ref{memoryconsumption} shows five convolution layers of AlexNet and their memory-usage ratios of FFT over GEMM given mini-batch size $128$. The memory space required by the first layer with FFT is $11.6$ times of that required by GEMM. (The parameters $B_{i} \times H_{i}$ and $B_{i+1} \times H_{i+1}$ represent the number of pixels of the inputs and outputs at the $i^{th}$ layer, respectively. Similarly, the parameters $D_{i}$ and $D_{i+1}$ represent the depths of the inputs and outputs at the $i^{th}$, respectively. The parameter $F$ represents the size of filters.) 

To further understand the impact of $X_{mini}$, we experimented with MXNet and TensorFlow, and plot system throughout ($y$-$axis$) versus $X_{mini}$ ($x$-$axis$) in Figure~\ref{fig:exp1_throughtput_mini-batch}. Although different frameworks may yield different throughputs, the trend remains the same, that is, the system throughput degrades once after $X_{mini}$ reaches a threshold.  
The reason why the throughput drops is that MXNet and TensorFlow choose to run a slower convolution algorithm due to the constrained free memory caused by the increased $X_{mini}$. How to determine the optimal $X_{mini}$?  We next formulate the problem of determining $X_{mini}$ as an optimization problem.

\subsubsection{Optimizing Mini-batch Size}
\label{apdx:opt}

In order to formulate the problem of determining $X_{mini}$, we first define a memory constraint $M_{bound}$, which is built into the later optimization formulas for $X_{mini}$. During our formulation, most of the symbols follow in the same fashion of \cite{CS231CNN}. 

\vspace{0.5em}
\noindent \underline{Deriving $M_{bound}$}.
\vspace{0.5em}

We assume that a CNN such as AlexNet \cite{NIPS2012_alexnet} consists of two major components: feature extraction and classification. Further, we assume that the feature extraction part comprises of $n$ layers where stacked convolution layers are optionally followed by pooling layers, and the classification part consists of $m$ fully-connected layers. We use $B_{i} \times H_{i} \times D_{i}$ and $B_{i+1} \times H_{i+1} \times D_{i+1}$ where $i \in \{0,1,\ldots,n \}$ to represent the sizes of inputs and outputs of convolution layers (or pooling layers), respectively. In particular, the size $B_{0} \times H_{0} \times D_{0}$ represents the size of input data. If we take training AlexNet on the ImageNet \cite{imagenet_cvpr09} as the example, $B_{0} \times H_{0} \times D_{0}$ is equal to $224 \times 224 \times 3$. For the $i^{th}$ layer of convolution and pooling layers, we denote its spatial extent (i.e. the size of filters) as $F_{i}$, its stride as $S_{i}$, its amount of padding as $P_{i}$, and its number of filters as $K_{i}$. Please note that if the $i^{th}$ layer is a pooling layer, its $K_{i}$ is equal to zero, i.e. $K_{i}=0$. Thus, the inputs and outputs in the feature extraction part have the following relations:     
\begin{equation}\label{jason_re1}
\begin{aligned}
&B_{i+1}=(B_{i}-F_{i+1}+2P_{i+1})/S_{i+1}+1, \\
&H_{i+1}=(H_{i}-F_{i+1}+2P_{i+1})/S_{i+1}+1, and\\
&D_{i+1}=
\begin{cases}
K_{i+1}, & \mbox{if }(i+1)^{th} \mbox{ layer is convolution layer} \\
D_{i},   & \mbox{if }(i+1)^{th} \mbox{ layer is pooling layer}
\end{cases}
.
\end{aligned}
\end{equation}

The memory allocated for the feature extraction part of CNNs includes the input data, outputs (i.e. feature maps) of all the layers, model parameters, and gradients. We assume that all the values are stored by using single precision floating point ($32$bits). Based on the aforementioned notations and \Cref{jason_re1}, the memory usage for the input data and outputs of all layers in the feature extraction part can be calculated as follows:
\begin{equation}\label{mem:fm}
M_{FM} = \sum_{i=0}^{n} B_{i} \times H_{i} \times D_{i} \times X_{mini} \times 32 \,. 
\end{equation}
Regarding the model parameters, there are two kinds of parameters: weights and biases. Though the biases are often omitted for simplicity in the literature, we take them into account here in order to estimate the memory usage precisely. Besides, we assume that the size of the gradients is twice as the size of the model parameters \footnote{For each training instance, we need to store the gradients of all model parameters. The aggregated gradients of all model parameters are also required for a specific batch.}. Thus, we can derive the memory usage for the model parameters and their related gradients by the following equation:
\begin{equation}\label{mem:mp}
\begin{aligned}
M_{MP} = 
& \sum_{i=1}^{n} F_{i} \times F_{i} \times D_{i} \times K_{i} \times 3 \times 32 \,+ &(weights) \\
& \sum_{i=1}^{n} K_{i} \times 3 \times 32 \, &(biases)\,.
\end{aligned}
\end{equation}

Furthermore, the memory allocated for the classification part of CNNs contains the outputs of all neurons and model parameters. We use $L_{j}$ where $j \in \{1,\ldots,m\}$ to denote the number of neurons at $j^{th}$ layer. Again, we make the same assumption that the size of the gradients is twice as the size of the model parameters. Therefore, the memory usage for the classification part of CNNs is as follows:   
\begin{equation}\label{mem:c}
\begin{aligned}
M_{C} = 
& \sum_{j=1}^{m} L_{j} \times 32 \,+ &(outputs) \\
& \sum_{j=1}^{m-1} L_{j} \times L_{j+1} \times 3 \times 32 \,+ &(weights) \\
& (m-1) \times 3 \times 32 \, &(biases)\,.
\end{aligned}
\end{equation}
According to \Crefrange{mem:fm}{mem:c}, the memory constraint $M_{bound}$ can be approximately determined by the following equation:
\begin{equation}\label{mem:bound}
M_{bound} = M_{GPU} - M_{FM} - M_{MP} - M_{C},
\end{equation}
where $M_{GPU}$ is the total memory of a GPU in terms of bits.

\vspace{0.5em}
\noindent \underline{Deriving $X_{mini}$}.
\vspace{0.5em}

Assuming that there are $p$ kinds of convolution algorithms, and $q$ layers in the CNN. (In the case that we have illustrated so far, $p = 2$. Other choices of convolution algorithms can be Winograd minimal convolution algorithm~\cite{lavin2016fast}, Strassen algorithm~\cite{cong2014minimizing}, fbfft~\cite{vasilache2014fast}, etc.) The parameter $x_{k,l} \in \{0,1\}$ represents whether the $k^{th}$ layer uses the $l^{th}$ convolution algorithm or not. When $x_{k,l}$ is evaluated to $1$, it means that the $k^{th}$ layer uses the $l^{th}$ algorithm to compute convolution. The value $T_{k,l}$ is the time consumption at the $k^{th}$ layer for the $l^{th}$ algorithm. The value $M_{k,l}$ is the memory consumption at the $k^{th}$ layer for the $l^{th}$ algorithm. Thus, the problem of determining $X_{mini}$ can be formulated an optimization problem as follows:
\begin{equation}\label{jason_eq1}
\begin{aligned}
&min \sum_{k=1}^{q} \sum_{l=1}^{p} x_{k,l} \times T_{k,l} \\
&s.t. \sum_{k=1}^{q} \sum_{l=1}^{p} x_{k,l} \times M_{k,l} \le M_{bound} \quad and \\
& \forall k \, \sum_{l=1}^{p} x_{k,l} = 1,
\end{aligned}
\end{equation}
where the $M_{bound}$ is derived from \Cref{mem:bound}. 

Obviously, \Cref{jason_eq1} is an integer linear programming (ILP) problem \cite{nemhauser1988integer}, which is NP-hard. However, there are several off-the-shelf heuristic methods and libraries (e.g. GLPK \cite{GLPK}) for solving ILP problems.
Given a range of mini-batch sizes that can attain good accuracy, 
we can derive the estimated training time for each mini-batch size by solving \Cref{jason_eq1}. The mini-batch size which leads to the minimal training time is then the suggested $X_{mini}$.

\subsubsection{Refining Model for Speed}
\begin{figure}[t]
\centering
\includegraphics[width=0.45\textwidth]{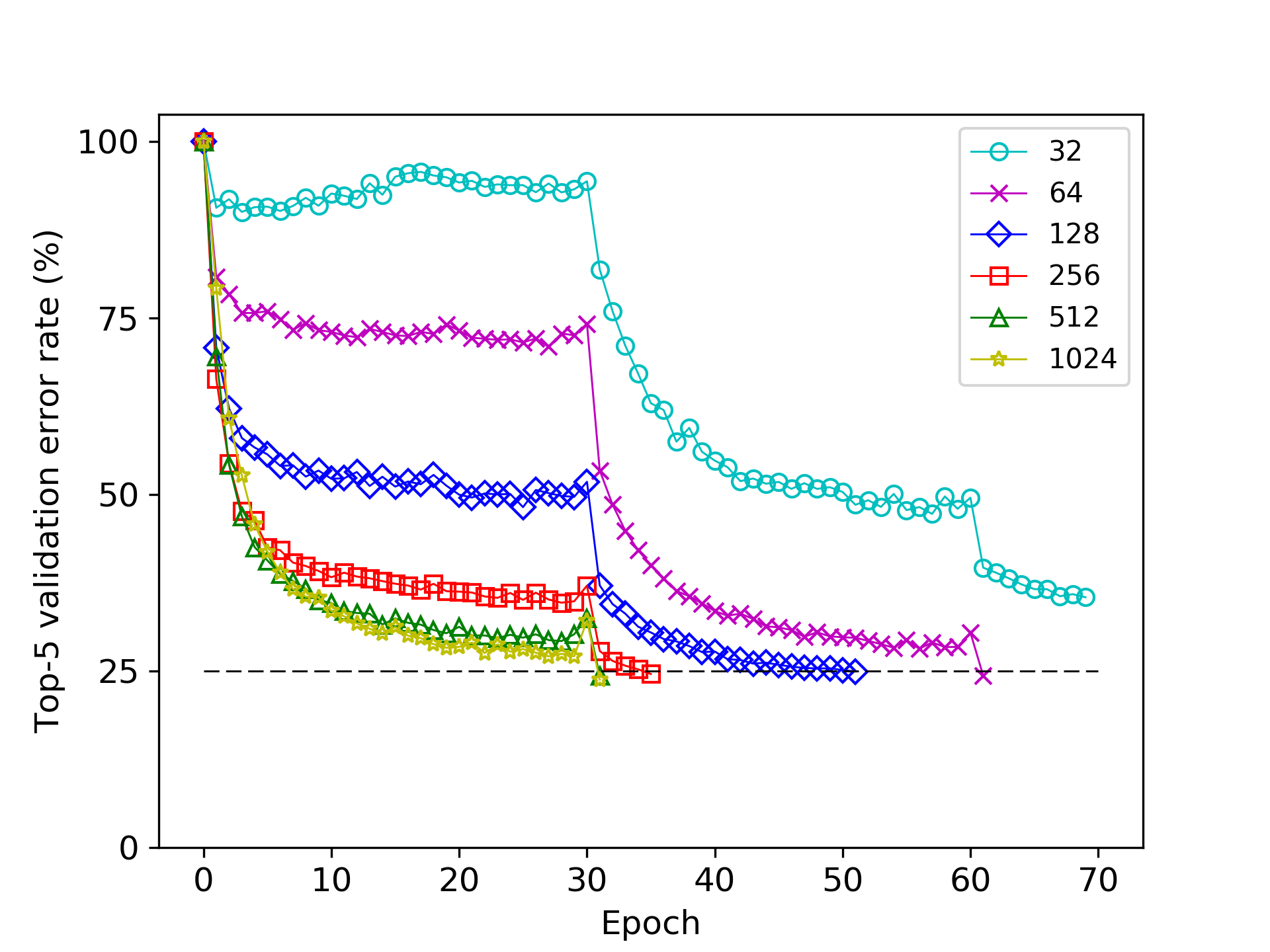}
\vspace{-1em}
\caption{Learning curves of different mini-batch sizes with respect to number of epochs}
\label{fig:exp1_epoch_error}
\vspace{-1em}
\end{figure}

This far, we assume that a CNN model is given to determine $X_{mini}$ and layer-dependent convolution algorithms to maximize training speed.  We can make two further adjustments:
\begin{itemize}
\item {\em Permit $X_{mini}$ reduction}.
The researchers may need to compromise on smaller mini-batch size if the target one is not feasible or does not deliver acceptable performance under the constraint of GPU memory size. Ghadimi et al.~\cite{ghadimi2013stochastic} shows that the convergence rate of SGD on a non-convex function is bounded by $O(1/\sqrt[]{K})$, where $K$ is the number of samples seen, i.e., mini-batch size. 
It can be interpreted that a range of mini-batch sizes can deliver similar convergence quality. In Figure~\ref{fig:exp1_epoch_error}, the $x$-axis depicts the epoch number and the $y$-axis depicts the top-$5$ validation error rate\footnotemark.
\footnotetext{AlexNet achieved $18.2$\% top-5 error rate in in the ILSVRC-2012 competition, whereas we obtained $21$\% in our experiments. This is because we did not perform all the tricks for data augmentation and fine-tuning. We choose $25$\% as the termination criterion to demonstrate convergence behavior when mini-batch sizes are different.}
The figure shows that indeed a range of mini-batch sizes enjoy similar convergence quality. Therefore, we could reduce $X_{mini}$ to increase $M_{bound}$ to permit more memory space to run a faster convolution execution to achieve overall speedup.

\item {\em Permit model adjustment}. Suppose that the constrained space of memory prevents us from running a faster algorithm. We could adjust the CNN model to free up some memory.  For instance, if the $i^{th}$ layer can be sped up ten times and the $j^{th}$ only twice.  To accommodate running a faster algorithm for the $i^{th}$ layer, we could adjust both layers to e.g., use a larger stride or memory-efficient filters.
\end{itemize}

\subsection{Scale with Multiple GPUs}
\label{ssec:multigpu}
When one GPU cannot handle the training task timely, employing multiple GPUs is the next logical step to share the workload and achieve speedup.  When $G$ GPUs are used and the maximal $100\%$ efficiency is achieved, the speedup is $G$ times.  Let $\alpha$ denote the system efficiency between $0\%$ and $100\%$.
Lemma~\ref{th:alpha} provides the estimated efficiency given $G$ GPUs.

\begin{lemma}
\label{th:alpha}
Let $T$ denote the total training time, where $T$ can be divided into computation time $T_{C}$ and overhead $T_{O}$. Let $R_O$ denote the ratio of overhead or $R_O = T_O / T_C$. Suppose the desired efficiency of the system is $\alpha$, where $\alpha \le 100\%$. The efficiency can be estimated as
\begin{displaymath}{\alpha = \frac{1 + R_{O}}{1 + GR_{O}}}.\end{displaymath}
\end{lemma}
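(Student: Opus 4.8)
The plan is to treat this as a direct application of the Amdahl's-law reasoning invoked earlier in the paper, where the \emph{efficiency} $\alpha$ is taken to be the realized speedup divided by the number of processing units $G$. First I would pin down the single-GPU baseline: by hypothesis the total training time decomposes as $T = T_C + T_O$, the parallelizable computation plus the non-parallelizable overhead. Using the definition $R_O = T_O/T_C$, this baseline is conveniently rewritten as $T = T_C(1 + R_O)$.

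Next I would model the $G$-GPU runtime, which carries the real content of the lemma. The key assumption is that the computation work $T_C$ distributes perfectly across the $G$ GPUs (each GPU handling a $1/G$ share of the mini-batch in data-parallel fashion), so it contributes only $T_C/G$, whereas the overhead $T_O$ is precisely the portion that does \emph{not} benefit from additional GPUs and therefore remains $T_O$. Under these assumptions the $G$-GPU runtime is $T_G = T_C/G + T_O$.

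The remaining steps are arithmetic. The speedup is $S = T/T_G = (T_C + T_O)/(T_C/G + T_O)$, and the efficiency is $\alpha = S/G = (T_C + T_O)/\bigl(G(T_C/G + T_O)\bigr)$. Distributing the $G$ through the denominator collapses it to $T_C + G T_O$, so $\alpha = (T_C + T_O)/(T_C + G T_O)$; dividing numerator and denominator by $T_C$ and substituting $R_O = T_O/T_C$ yields the claimed $\alpha = (1 + R_O)/(1 + G R_O)$. I would also note in passing that $\alpha \le 100\%$ is immediate, since $G \ge 1$ and $R_O \ge 0$ force $1 + G R_O \ge 1 + R_O$.

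The main obstacle is not algebraic but justificatory. I must defend the modeling choice that the overhead term stays fixed at $T_O$: it should not shrink (it is by definition the serial, non-parallelizable fraction), nor am I accounting for it growing (inter-GPU communication can in practice increase with $G$). I would lean on the paper's Amdahl's-law framing of overhead as the non-parallelizable portion, presenting the constant-$T_O$ assumption as a first-order estimate that isolates the dominant scaling effect, with the tightening of communication overhead deferred to the later multi-GPU and parameter-server analysis.
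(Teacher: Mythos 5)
Your proof is correct and follows essentially the same route as the paper's: the paper invokes Amdahl's law with parallelizable fraction $P = T_C/(T_C+T_O)$ and sets the speedup equal to $\alpha G$, which is exactly your runtime model $T_G = T_C/G + T_O$ with $\alpha = S/G$, just cited as a formula rather than derived from the explicit two-term runtime. Your added observations (the constant-$T_O$ modeling caveat and the check that $\alpha \le 100\%$) are sound but do not change the argument.
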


\begin{proof}
Details of the proof is documented in Appendix~\ref{apdx:proof}.
\end{proof}

Lemma~\ref{th:alpha} can be used to estimate system efficiency given $R_O$ and $G$, and also can be used to estimate the acceptable $R_O$ given $\alpha$ and $G$.  For example, given four GPUs and target efficiency $\alpha = 80\%$, the ratio of overhead that cannot be hidden behind computation must not exceed $9\%$. 

To estimate $R_{O}$, a practitioner can quickly profile the training program for a couple of epochs.
Some frameworks such as MXNet and TensorFlow provide the capability to visualize the execution of a training task, which can be used to derive $R_{O}$. 
If a computation framework is not equipped with a profiling tool, one can visualize program execution  using \textbf{nvprof}\footnotemark. \footnotetext{\textbf{nvprof} only profiles GPU activities, so the CPU activities cannot be analyzed.} Suppose a practitioner is asked to make $3x$ speedup of a training task, and she measures $R_{O} = 10\%$. According to the lemma, she can configure a $4$ GPU system to achieve the performance objective.

\begin{figure}[!ht]
\centering
\centerline{\includegraphics[width=0.45\textwidth]{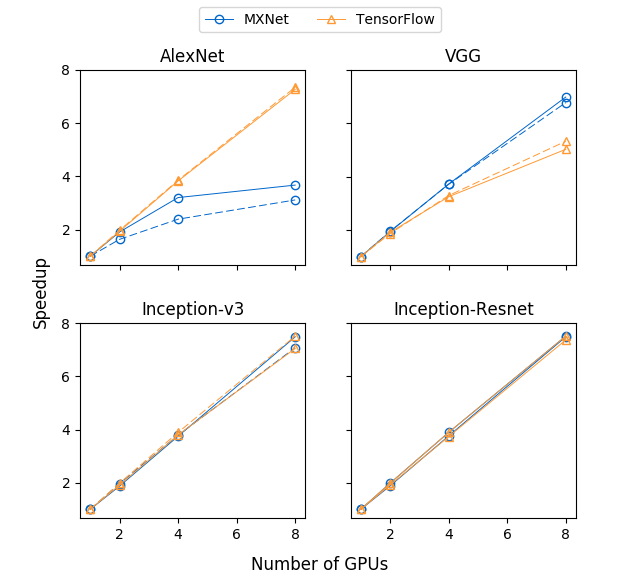}}
\vspace{-1em}
\caption{Comparison of speedup (dotted-line: estimated, solid-line: actual)}
\label{fig:exp3_predict}
\vspace{-2em}
\end{figure}

To evaluate Lemma~\ref{th:alpha}, we conduct the training on four neural networks to compare the estimated speedup with actual speedup. Though the estimated $R_O$ is a constant and in real-time overheads could be stochastic, Figure~\ref{fig:exp3_predict}  shows that in all cases the estimated speedup matches the the actual speedup.
Therefore, the lemma can be used to estimate the performance gain of using $G$ GPUs and devise a cost-effective training plan including system configuration and parameter settings.

The overall speedup can be improved by reducing computation overheads. We conclude this subsection by providing two overhead reduction suggestions.  
\begin{itemize}
\item {\em Data transfer pipelining}.
Low throughput of feeding training data is a major bottleneck that degrades the multi-GPU training performance as the demand for bus bandwidth for loading data grows with the number of GPUs. 
Pipelining data loading (I/O) with computation is the effective way to reduce the overhead brought by data preparation. The impact of disk I/O can be further alleviated by using better disk or reducing expensive file operations like seek. Modern frameworks such as TensorFlow and MXNet provide the way to rearrange training samples so that the data can be read in sequentially. The load for decoding and augmenting training data may cause extreme high CPU usage and drags the performance of data provision. The computation intensive jobs should be avoided on CPUs.

\item {\em Peer-to-peer parameter updates}. Synchronizing parameter updates among GPUs, as indicated in step $6$ in Figure~\ref{fig:sec3_node_arch}, is another common bottleneck in multi-GPU training environment. A naive implementation is to keep the latest model at main memory, transfer the latest copy to GPUs at the beginning of batch processing, and aggregate updates from all GPUs.  It leads to bus contention and huge data load between main memory and GPUs under CUDA programming model.
To alleviate the hot spot issue, the weight updates can be completed via GPU high-speed DMA if GPU supports peer-to-peer transfer.
\end{itemize}

If multiple GPUs with low computing overhead still cannot meet the desired performance, distributed training is the option you can consider. We'll discuss the topic in the next section.

\subsection{Distributed Training}
\label{ssec:multihost}

Distributed training has become increasingly important because of the growth of dataset size and model complexity. To effectively orchestrate multiple machines for a training task, the system must provide a way to manage the globally shared model parameters. The parameter server architecture, i.e., a cluster of machines to manage parameters, is widely-used to reduce I/O latency for handling parameter updates~\cite{PLDA+,Li:2014tt}. As shown in Figure~\ref{fig:sec3_node_arch}, parameter servers maintain latest parameter values and serve all workers. The workers retrieve updated parameters from the cluster, complete computation, and then push updates back to the cluster of parameter servers. 

Parameter updates can be performed either synchronously or asynchronously. Employing synchronous updates ensures consistency but suffers from the performance dragger issue. Updating parameters asynchronously gains training speed and may not significantly affect training accuracy according to prior studies \cite{dean2012large}. When I/Os can be performed asynchronously, fetching and updating parameters can be hidden behind computation and hence computation overhead can be mitigated. We assume that an asynchronous update policy is employed.

Let $N_{ps}$ denote the number of parameter servers.
How many parameter servers should be configured to hide the computation overhead?  
We select $N_{ps}$ when $N_{ps} +1$ can no longer speed up the training task.
Before we prove our lemma that derives the most effective $N_{ps}$, we enumerate two desired subgoals or conditions.  

The first subgoal is that the computation duration of a worker should be longer than its communication time with the parameter cluster.  In other words, the I/O time between a worker thread and its designated parameter servers is shorter than the computation time of that worker.  This condition allows parameters being pre-fetched before a new round of computation commences.  Therefore, the I/O overhead can be hidden behind computation.  The second subgoal is to distribute parameter-update workload evenly among parameter servers. We assume a dynamic load-balancing policy (e.g., \cite{2DBubbleUp}) can be employed to distribute parameter retrieval and update workload almost evenly among $N_{ps}$ servers.

\begin{lemma}
\label{def:n_ps}
Given a round of GPU computation time $T_{C}$ on a worker, number of workers $N_{w}$, network bandwidth $B_{ps}$, and parameter size $S_{p}$, the minimum number of parameter servers $N_{ps}$ required to mask communication I/Os is
\begin{displaymath}
{N_{ps} \simeq \left \lceil{ \frac{2  S_{P}  N{w}}{B_{ps} T_{C}}}\right \rceil}.
\end{displaymath}
\end{lemma}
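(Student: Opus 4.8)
The plan is to compare the per-round communication demand placed on the parameter-server cluster against the computation time $T_C$ that is available to hide it, and then to solve the resulting inequality for the smallest integer $N_{ps}$. The two subgoals stated just before the lemma are exactly the hypotheses I will invoke: the first (computation longer than communication) supplies the masking inequality $T_{\mathrm{comm}} \le T_C$, and the second (even load balancing) lets me treat each of the $N_{ps}$ servers as carrying an equal $1/N_{ps}$ share of the total traffic.

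First I would account for the data volume moved in one round of asynchronous update. Each worker must \emph{pull} the current parameters (size $S_P$) from the cluster before it computes, and \emph{push} its update (again size $S_P$) back afterward; this is where the factor of $2$ originates, giving $2 S_P$ per worker per round. Summing over the $N_w$ workers, the cluster must transport $2 S_P N_w$ bits each round. By the load-balancing subgoal, the parameters are partitioned evenly so that each server handles a $1/N_{ps}$ fraction, i.e. $2 S_P N_w / N_{ps}$ bits, and at per-server network bandwidth $B_{ps}$ this takes time $T_{\mathrm{comm}} = 2 S_P N_w / (N_{ps} B_{ps})$.

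Next I would impose the masking condition from the first subgoal, $T_{\mathrm{comm}} \le T_C$, which reads $2 S_P N_w / (N_{ps} B_{ps}) \le T_C$. Rearranging isolates $N_{ps} \ge 2 S_P N_w / (B_{ps} T_C)$, and since $N_{ps}$ must be an integer, the smallest value that still hides the I/O behind computation is the ceiling of the right-hand side, yielding the claimed expression. The minimality is then immediate because $T_{\mathrm{comm}}$ is strictly decreasing in $N_{ps}$, so any smaller integer violates the masking inequality; this also matches the ``$N_{ps}+1$ can no longer speed up training'' stopping rule quoted above the statement.

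The step I expect to require the most care — rather than the most calculation — is pinning down the communication model that produces the clean $2 S_P N_w / N_{ps}$ term: namely arguing that $B_{ps}$ is the \emph{per-server} bandwidth so that the cluster's aggregate bandwidth scales as $N_{ps} B_{ps}$, and that the idealized perfect balance (rather than a merely approximate one) is what licenses the equal-share division. This modeling idealization is precisely why the conclusion is written with ``$\simeq$'' rather than equality, and I would flag that the ceiling together with the even-distribution assumption is what absorbs the lower-order overheads and rounding that a fully rigorous accounting would otherwise have to track.
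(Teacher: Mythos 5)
Your proposal is correct and follows essentially the same route as the paper's own proof: total per-round pull/push traffic $2 S_p N_w$, evenly divided across servers with per-server bandwidth $B_{ps}$, the masking condition $T_{\mathrm{comm}} \le T_C$, and then isolating $N_{ps}$ (with the ceiling giving the minimal integer). Your added remarks on minimality and on $B_{ps}$ being per-server bandwidth only make explicit what the paper leaves implicit.
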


\begin{proof}
The total size of communication I/O load generated in a round of pull to and push from parameter servers is $2 \times S_{p} \times N_{w}$.  Given that the I/O bandwidth is $N_{ps}$ and the load evenly distributed among $N_{ps}$ servers, the communication time can be written as $2 \times S_{p} \times  N_{w} / N_{ps} \times B_{ps}$. The ideal pipeline case \cite{PLDA+} is when the I/O time can be hidden behind computation time. Therefore, the I/O time must be smaller than or equal to the computation time $T_C$. (The parameter update time on a parameter server is ignored because that time is relative small comparing with network transmission time.)  We can write the constraint to be

\begin{equation}
\begin{aligned}
{T_{C} \geqslant \frac{2 \times S_{p} \times N{w}}{N_{ps} \times B_{ps}}}.
\end{aligned}
\end{equation}

\enlargethispage{-\baselineskip}
Isolating $N_{ps}$ on the left-hand side of the equation, we obtain

\begin{equation}
\begin{aligned}
{N_{ps} \geqslant \frac{2 S_{p} N{w}}{T_{C} B_{ps}}}.
\end{aligned}
\end{equation}

\end{proof}

Lemma~\ref{def:n_ps} suggests a back-of-the-envelop estimate on $N_{ps}$ given two ideal conditions. When the conditions do not hold, more parameter servers should be employed to be able to mask I/O overhead. Three measures are recommended:

\begin{enumerate}
\item {\em Increase $T_C$}. When workload cannot be evenly distributed, the computation time should be longer to mask most I/Os.  Therefore, a good strategy is to maintain a large $T_C$.  In other words, having a larger mini-batch size when the memory capacity permits is helpful. Besides, a larger mini-batch leads to less number of parameter updates and improves overall performance.

\item {\em Improve $B_{ps}$}. Increasing network bandwidth can reduce I/O time. 
Insufficient network bandwidth of the communication channel may throttle the training performance. Take AlexNet as an example, pushing parameter updates produces around $180 MB$ network traffic, which exceeds the capacity of commonly used $1 Gbit$ Ethernet. Thus, high speed networking is highly recommended when applying distributed training.

\item {\em Balance workload}.  Prior works \cite{2DBubbleUp, PLDA+} propose effective data placement methods to balance dynamic workload. Such load balancing schemes can avoid I/O bottlenecks, and lead to overall overhead reduction.
\end{enumerate}

\vspace{0.5em}
\section{Concluding Remarks}
In this work, we investigated typical deep learning frameworks running on representative deep learning models and datasets.
From analyses, we studied the computation bottlenecks in single-GPU, multi-GPU and distributed configurations.
Furthermore, we derived the back-of-the-envelope estimation for the GPU number to configure a training system, given a budget or deadline.
Finally, for distributed training, we suggested a formula for estimating the number of parameter servers to be configured to reduce communication overhead.

AlphaGo showed that more training data can only be helpful towards improving machine intelligence and competitiveness. Recently, Residual Neural Networks \cite{he2016deep, inception-v4} shows that in both theory and practice, more layers of neural networks correlates to a higher achieved accuracy by a trained classifier. At a 2016 machine learning workshop \cite{andrew2016}, Andrew Ng presented that the traditional biases and variance tradeoff have not appeared in training large-scale deep architectures. In other words, the larger the scale, the better suited the architecture is for improving the intelligence of a ``machine''.

This ``larger the better'' conclusion certainly demands that database and machine learning communities devise data management and data mining systems that can handle an ever increasing workload. We foresee that not only will algorithmic research continue flourishing, but system research and development will as well. Already we have seen that GPU vendors are enhancing distributed GPU implementations. Advances in interconnected technology and implementation will help reduce both I/O overhead in data loading and in parameter updates.

In this work, we provided practical guidelines to facilitate practitioners the configuration of a system to speed up training performance. 
Our future work will focus on effectively managing such large-scale training systems to
achieve both high accuracy and cost-effectiveness in three specific areas:

\begin{itemize}
\item {\em Flexibility}. Prior work \cite{Anonymous:NXi9MKDQ} provided a flexibility to work with any compatible open-source frameworks.  
For example, we expect to simultaneously work with multiple frameworks such as MXNet and TensorFlow to complete a large-scale training task running on Azure, AWS, GCE, and other available commercial clouds.
\item {\em Scalability and elasticity}. In addition to the parameter estimation performed in this work, we will research dynamic schemes to adjust  
allocation and scheduling parameters according to the dynamic workload nature of distributed systems.
\item {\em Ease of management}. We plan to devise tools with the good user experience for monitoring and managing the training system.
\end{itemize}

\appendix
\section{Appendices}

\subsection{Proof of Lemma~\ref{th:alpha}}
\label{apdx:proof}
According to Amdahl's law, given $G$ GPUs and the fraction of the execution time of the task that can be parallelized $P$, the theoretical speedup is $\frac{1}{(1-P)+\frac{P}{G}}$. The maximum speedup $G$ can not be achieved if there are parts cannot be parallelized. Thus:
\begin{equation}
\label{eq:amdahl}
\alpha G = \frac{1}{(1-P)+\frac{P}{G}}
\end{equation}
$P$ can be expressed as:
\begin{equation}
 P = \frac{T_{C}}{T} = \frac{T_{C}}{T_{C}+T_{O}}
\end{equation}
Substituting $P$ into equation~\ref{eq:amdahl} yields:
\begin{equation}
\frac{T_{O}}{T_{C}} = \frac{1 - \alpha}{\alpha G - 1}
\end{equation}
Then:
\begin{equation}
\label{eq:amro}
R_{O} = \frac{1 - \alpha}{\alpha G - 1}
\end{equation}

By rearranging equation~\ref{eq:amro}, $\alpha$ can be expressed in terms of $G$ and $R_{O}$ as follows:
\begin{equation}
\label{eq:alpha}
{\alpha = \frac{1 + R_{O}}{1 + GR_{O}}}
\end{equation}

\linespread{1.12}
\bibliographystyle{ACM-Reference-Format}
\bibliography{reference} 

%%% -*-BibTeX-*-
%%% Do NOT edit. File created by BibTeX with style
%%% ACM-Reference-Format-Journals [18-Jan-2012].

\begin{thebibliography}{00}

%%% ====================================================================
%%% NOTE TO THE USER: you can override these defaults by providing
%%% customized versions of any of these macros before the \bibliography
%%% command.  Each of them MUST provide its own final punctuation,
%%% except for \shownote{}, \showDOI{}, and \showURL{}.  The latter two
%%% do not use final punctuation, in order to avoid confusing it with
%%% the Web address.
%%%
%%% To suppress output of a particular field, define its macro to expand
%%% to an empty string, or better, \unskip, like this:
%%%
%%% \newcommand{\showDOI}[1]{\unskip}   % LaTeX syntax
%%%
%%% \def \showDOI #1{\unskip}           % plain TeX syntax
%%%
%%% ====================================================================

\ifx \showCODEN    \undefined \def \showCODEN     #1{\unskip}     \fi
\ifx \showDOI      \undefined \def \showDOI       #1{#1}\fi
\ifx \showISBNx    \undefined \def \showISBNx     #1{\unskip}     \fi
\ifx \showISBNxiii \undefined \def \showISBNxiii  #1{\unskip}     \fi
\ifx \showISSN     \undefined \def \showISSN      #1{\unskip}     \fi
\ifx \showLCCN     \undefined \def \showLCCN      #1{\unskip}     \fi
\ifx \shownote     \undefined \def \shownote      #1{#1}          \fi
\ifx \showarticletitle \undefined \def \showarticletitle #1{#1}   \fi
\ifx \showURL      \undefined \def \showURL       {\relax}        \fi
% The following commands are used for tagged output and should be
% invisible to TeX
\providecommand\bibfield[2]{#2}
\providecommand\bibinfo[2]{#2}
\providecommand\natexlab[1]{#1}
\providecommand\showeprint[2][]{arXiv:#2}

\bibitem[\protect\citeauthoryear{??}{GLP}{2012}]%
        {GLPK}
 \bibinfo{year}{2012}\natexlab{}.
\newblock \bibinfo{title}{GNU Linear Programming Kit}.
\newblock \bibinfo{howpublished}{\url{https://www.gnu.org/software/glpk/}}.
  (\bibinfo{year}{2012}).
\newblock


\bibitem[\protect\citeauthoryear{??}{CS2}{2017}]%
        {CS231CNN}
 \bibinfo{year}{2017}\natexlab{}.
\newblock \bibinfo{title}{CS231n Convolutional neural network for visual
  recognition}.
\newblock \bibinfo{howpublished}{\url{http://cs231n.github.io/}}.
  (\bibinfo{year}{2017}).
\newblock


\bibitem[\protect\citeauthoryear{Abadi, Agarwal, Barham, Brevdo, Chen, Citro,
  Corrado, Davis, Dean, Devin, et~al\mbox{.}}{Abadi et~al\mbox{.}}{2015}]%
        {abadi2015tensorflow}
\bibfield{author}{\bibinfo{person}{Mart{\i}n Abadi}, \bibinfo{person}{Ashish
  Agarwal}, \bibinfo{person}{Paul Barham}, \bibinfo{person}{Eugene Brevdo},
  \bibinfo{person}{Zhifeng Chen}, \bibinfo{person}{Craig Citro},
  \bibinfo{person}{Greg~S Corrado}, \bibinfo{person}{Andy Davis},
  \bibinfo{person}{Jeffrey Dean}, \bibinfo{person}{Matthieu Devin}, {and}
  \bibinfo{person}{others}.} \bibinfo{year}{2015}\natexlab{}.
\newblock \showarticletitle{TensorFlow: Large-scale machine learning on
  heterogeneous systems, 2015}.
\newblock \bibinfo{journal}{{\em Software available from tensorflow. org\/}}
  \bibinfo{volume}{1} (\bibinfo{year}{2015}).
\newblock


\bibitem[\protect\citeauthoryear{Amdahl}{Amdahl}{1967}]%
        {amdahl1967validity}
\bibfield{author}{\bibinfo{person}{Gene~M Amdahl}.}
  \bibinfo{year}{1967}\natexlab{}.
\newblock \showarticletitle{Validity of the single processor approach to
  achieving large scale computing capabilities}. In \bibinfo{booktitle}{{\em
  Proceedings of the April 18-20, 1967, spring joint computer conference}}.
  ACM, \bibinfo{pages}{483--485}.
\newblock


\bibitem[\protect\citeauthoryear{Bahrampour, Ramakrishnan, Schott, and
  Shah}{Bahrampour et~al\mbox{.}}{2015}]%
        {Bahrampour2015uv}
\bibfield{author}{\bibinfo{person}{Soheil Bahrampour}, \bibinfo{person}{Naveen
  Ramakrishnan}, \bibinfo{person}{Lukas Schott}, {and} \bibinfo{person}{Mohak
  Shah}.} \bibinfo{year}{2015}\natexlab{}.
\newblock \showarticletitle{{Comparative Study of Deep Learning Software
  Frameworks}}.
\newblock \bibinfo{journal}{{\em arXiv.org\/}} (\bibinfo{date}{Nov.}
  \bibinfo{year}{2015}).
\newblock
\showeprint{1511.06435v3}


\bibitem[\protect\citeauthoryear{Bengio}{Bengio}{2012}]%
        {bengio2012practical}
\bibfield{author}{\bibinfo{person}{Yoshua Bengio}.}
  \bibinfo{year}{2012}\natexlab{}.
\newblock \showarticletitle{Practical recommendations for gradient-based
  training of deep architectures}.
\newblock In \bibinfo{booktitle}{{\em Neural Networks: Tricks of the Trade}}.
  \bibinfo{publisher}{Springer}, \bibinfo{pages}{437--478}.
\newblock


\bibitem[\protect\citeauthoryear{Chang, Garcia-Molina, and Li}{Chang
  et~al\mbox{.}}{1998}]%
        {2DBubbleUp}
\bibfield{author}{\bibinfo{person}{Edward Chang}, \bibinfo{person}{Hector
  Garcia-Molina}, {and} \bibinfo{person}{Chen Li}.}
  \bibinfo{year}{1998}\natexlab{}.
\newblock \bibinfo{booktitle}{{\em 2D BubbleUp: Managing Parallel Disks for
  Media Servers}}.
\newblock \bibinfo{type}{{T}echnical {R}eport}. \bibinfo{institution}{Stanford
  InfoLab}.
\newblock


\bibitem[\protect\citeauthoryear{Chen, Monga, Bengio, and Jozefowicz}{Chen
  et~al\mbox{.}}{2016}]%
        {chen2016revisiting}
\bibfield{author}{\bibinfo{person}{Jianmin Chen}, \bibinfo{person}{Rajat
  Monga}, \bibinfo{person}{Samy Bengio}, {and} \bibinfo{person}{Rafal
  Jozefowicz}.} \bibinfo{year}{2016}\natexlab{}.
\newblock \showarticletitle{Revisiting Distributed Synchronous SGD}.
\newblock \bibinfo{journal}{{\em arXiv preprint arXiv:1604.00981\/}}
  (\bibinfo{year}{2016}).
\newblock


\bibitem[\protect\citeauthoryear{Chen, Li, Li, Lin, Wang, Wang, Xiao, Xu,
  Zhang, and Zhang}{Chen et~al\mbox{.}}{2015}]%
        {chen2015mxnet}
\bibfield{author}{\bibinfo{person}{Tianqi Chen}, \bibinfo{person}{Mu Li},
  \bibinfo{person}{Yutian Li}, \bibinfo{person}{Min Lin},
  \bibinfo{person}{Naiyan Wang}, \bibinfo{person}{Minjie Wang},
  \bibinfo{person}{Tianjun Xiao}, \bibinfo{person}{Bing Xu},
  \bibinfo{person}{Chiyuan Zhang}, {and} \bibinfo{person}{Zheng Zhang}.}
  \bibinfo{year}{2015}\natexlab{}.
\newblock \showarticletitle{Mxnet: A flexible and efficient machine learning
  library for heterogeneous distributed systems}.
\newblock \bibinfo{journal}{{\em arXiv preprint arXiv:1512.01274\/}}
  (\bibinfo{year}{2015}).
\newblock


\bibitem[\protect\citeauthoryear{Chetlur, Woolley, Vandermersch, Cohen, Tran,
  Catanzaro, and Shelhamer}{Chetlur et~al\mbox{.}}{2014}]%
        {cudnn14}
\bibfield{author}{\bibinfo{person}{Sharan Chetlur}, \bibinfo{person}{Cliff
  Woolley}, \bibinfo{person}{Philippe Vandermersch}, \bibinfo{person}{Jonathan
  Cohen}, \bibinfo{person}{John Tran}, \bibinfo{person}{Bryan Catanzaro}, {and}
  \bibinfo{person}{Evan Shelhamer}.} \bibinfo{year}{2014}\natexlab{}.
\newblock \showarticletitle{cuDNN: Efficient Primitives for Deep Learning}.
\newblock \bibinfo{journal}{{\em CoRR\/}}  \bibinfo{volume}{abs/1410.0759}
  (\bibinfo{year}{2014}).
\newblock
\showURL{%
\url{http://arxiv.org/abs/1410.0759}}


\bibitem[\protect\citeauthoryear{Chilimbi, Suzue, Apacible, and
  Kalyanaraman}{Chilimbi et~al\mbox{.}}{2014}]%
        {chilimbi2014project}
\bibfield{author}{\bibinfo{person}{Trishul~M Chilimbi}, \bibinfo{person}{Yutaka
  Suzue}, \bibinfo{person}{Johnson Apacible}, {and} \bibinfo{person}{Karthik
  Kalyanaraman}.} \bibinfo{year}{2014}\natexlab{}.
\newblock \showarticletitle{Project Adam: Building an Efficient and Scalable
  Deep Learning Training System.}. In \bibinfo{booktitle}{{\em OSDI}},
  Vol.~\bibinfo{volume}{14}. \bibinfo{pages}{571--582}.
\newblock


\bibitem[\protect\citeauthoryear{Collobert, Kavukcuoglu, and Farabet}{Collobert
  et~al\mbox{.}}{2011}]%
        {collobert2011torch7}
\bibfield{author}{\bibinfo{person}{Ronan Collobert}, \bibinfo{person}{Koray
  Kavukcuoglu}, {and} \bibinfo{person}{Cl{\'e}ment Farabet}.}
  \bibinfo{year}{2011}\natexlab{}.
\newblock \showarticletitle{Torch7: A matlab-like environment for machine
  learning}.
\newblock  \bibinfo{number}{EPFL-CONF-192376} (\bibinfo{year}{2011}).
\newblock


\bibitem[\protect\citeauthoryear{Cong and Xiao}{Cong and Xiao}{2014}]%
        {cong2014minimizing}
\bibfield{author}{\bibinfo{person}{Jason Cong} {and} \bibinfo{person}{Bingjun
  Xiao}.} \bibinfo{year}{2014}\natexlab{}.
\newblock \showarticletitle{Minimizing computation in convolutional neural
  networks}. In \bibinfo{booktitle}{{\em International Conference on Artificial
  Neural Networks}}. Springer, \bibinfo{pages}{281--290}.
\newblock


\bibitem[\protect\citeauthoryear{Dally}{Dally}{}]%
        {Dally:18pvCAH2}
\bibfield{author}{\bibinfo{person}{W~J Dally}.}
\newblock \bibinfo{title}{{CNTK: An Embedded Language for Circuit Description,
  Dept. of Computer Science, California Institute of Technology, Display
  File}}.
\newblock


\bibitem[\protect\citeauthoryear{Dean, Corrado, Monga, Chen, Devin, Mao,
  Senior, Tucker, Yang, Le, et~al\mbox{.}}{Dean et~al\mbox{.}}{2012}]%
        {dean2012large}
\bibfield{author}{\bibinfo{person}{Jeffrey Dean}, \bibinfo{person}{Greg
  Corrado}, \bibinfo{person}{Rajat Monga}, \bibinfo{person}{Kai Chen},
  \bibinfo{person}{Matthieu Devin}, \bibinfo{person}{Mark Mao},
  \bibinfo{person}{Andrew Senior}, \bibinfo{person}{Paul Tucker},
  \bibinfo{person}{Ke Yang}, \bibinfo{person}{Quoc~V Le}, {and}
  \bibinfo{person}{others}.} \bibinfo{year}{2012}\natexlab{}.
\newblock \showarticletitle{Large scale distributed deep networks}.
\newblock  (\bibinfo{year}{2012}), \bibinfo{pages}{1223--1231}.
\newblock


\bibitem[\protect\citeauthoryear{Deng, Dong, Socher, Li, Li, and Fei-Fei}{Deng
  et~al\mbox{.}}{2009}]%
        {imagenet_cvpr09}
\bibfield{author}{\bibinfo{person}{J. Deng}, \bibinfo{person}{W. Dong},
  \bibinfo{person}{R. Socher}, \bibinfo{person}{L.-J. Li}, \bibinfo{person}{K.
  Li}, {and} \bibinfo{person}{L. Fei-Fei}.} \bibinfo{year}{2009}\natexlab{}.
\newblock \showarticletitle{{ImageNet: A Large-Scale Hierarchical Image
  Database}}. In \bibinfo{booktitle}{{\em CVPR09}}.
\newblock


\bibitem[\protect\citeauthoryear{Duchi, Hazan, and Singer}{Duchi
  et~al\mbox{.}}{2011}]%
        {duchi2011adaptive}
\bibfield{author}{\bibinfo{person}{John Duchi}, \bibinfo{person}{Elad Hazan},
  {and} \bibinfo{person}{Yoram Singer}.} \bibinfo{year}{2011}\natexlab{}.
\newblock \showarticletitle{Adaptive subgradient methods for online learning
  and stochastic optimization}.
\newblock \bibinfo{journal}{{\em Journal of Machine Learning Research\/}}
  \bibinfo{volume}{12}, \bibinfo{number}{Jul} (\bibinfo{year}{2011}),
  \bibinfo{pages}{2121--2159}.
\newblock


\bibitem[\protect\citeauthoryear{Elgohary, Boehm, Haas, Reiss, and
  Reinwald}{Elgohary et~al\mbox{.}}{2016}]%
        {elgohary2016compressed}
\bibfield{author}{\bibinfo{person}{Ahmed Elgohary}, \bibinfo{person}{Matthias
  Boehm}, \bibinfo{person}{Peter~J Haas}, \bibinfo{person}{Frederick~R Reiss},
  {and} \bibinfo{person}{Berthold Reinwald}.} \bibinfo{year}{2016}\natexlab{}.
\newblock \showarticletitle{Compressed linear algebra for large-scale machine
  learning}.
\newblock \bibinfo{journal}{{\em Proceedings of the VLDB Endowment\/}}
  \bibinfo{volume}{9}, \bibinfo{number}{12} (\bibinfo{year}{2016}),
  \bibinfo{pages}{960--971}.
\newblock


\bibitem[\protect\citeauthoryear{Fischer and Igel}{Fischer and Igel}{2012}]%
        {fischer2012introduction}
\bibfield{author}{\bibinfo{person}{Asja Fischer} {and}
  \bibinfo{person}{Christian Igel}.} \bibinfo{year}{2012}\natexlab{}.
\newblock \showarticletitle{An introduction to restricted Boltzmann machines}.
  In \bibinfo{booktitle}{{\em Iberoamerican Congress on Pattern Recognition}}.
  Springer, \bibinfo{pages}{14--36}.
\newblock


\bibitem[\protect\citeauthoryear{Ghadimi and Lan}{Ghadimi and Lan}{2013}]%
        {ghadimi2013stochastic}
\bibfield{author}{\bibinfo{person}{Saeed Ghadimi} {and}
  \bibinfo{person}{Guanghui Lan}.} \bibinfo{year}{2013}\natexlab{}.
\newblock \showarticletitle{Stochastic first-and zeroth-order methods for
  nonconvex stochastic programming}.
\newblock \bibinfo{journal}{{\em SIAM Journal on Optimization\/}}
  \bibinfo{volume}{23}, \bibinfo{number}{4} (\bibinfo{year}{2013}),
  \bibinfo{pages}{2341--2368}.
\newblock


\bibitem[\protect\citeauthoryear{Graves and Jaitly}{Graves and Jaitly}{2014}]%
        {Graves:2014vz}
\bibfield{author}{\bibinfo{person}{A Graves} {and} \bibinfo{person}{N Jaitly}.}
  \bibinfo{year}{2014}\natexlab{}.
\newblock \showarticletitle{{Towards End-To-End Speech Recognition with
  Recurrent Neural Networks.}}
\newblock \bibinfo{journal}{{\em ICML\/}} (\bibinfo{year}{2014}).
\newblock


\bibitem[\protect\citeauthoryear{Graves, Mohamed, and Hinton}{Graves
  et~al\mbox{.}}{2013}]%
        {Graves:2013wt}
\bibfield{author}{\bibinfo{person}{Alex Graves}, \bibinfo{person}{Abdel-rahman
  Mohamed}, {and} \bibinfo{person}{Geoffrey Hinton}.}
  \bibinfo{year}{2013}\natexlab{}.
\newblock \showarticletitle{{Speech Recognition with Deep Recurrent Neural
  Networks}}.
\newblock \bibinfo{journal}{{\em arXiv.org\/}} (\bibinfo{date}{March}
  \bibinfo{year}{2013}).
\newblock
\showeprint{1303.5778v1}


\bibitem[\protect\citeauthoryear{Hadjis, Abuzaid, Zhang, and R{\'e}}{Hadjis
  et~al\mbox{.}}{2015}]%
        {Hadjis:2015wx}
\bibfield{author}{\bibinfo{person}{Stefan Hadjis}, \bibinfo{person}{Firas
  Abuzaid}, \bibinfo{person}{Ce Zhang}, {and} \bibinfo{person}{Christopher
  R{\'e}}.} \bibinfo{year}{2015}\natexlab{}.
\newblock \showarticletitle{{Caffe con Troll: Shallow Ideas to Speed Up Deep
  Learning}}.
\newblock \bibinfo{journal}{{\em arXiv.org\/}} (\bibinfo{date}{April}
  \bibinfo{year}{2015}).
\newblock
\showeprint{1504.04343v2}


\bibitem[\protect\citeauthoryear{He, Zhang, Ren, and Sun}{He
  et~al\mbox{.}}{2016}]%
        {he2016deep}
\bibfield{author}{\bibinfo{person}{Kaiming He}, \bibinfo{person}{Xiangyu
  Zhang}, \bibinfo{person}{Shaoqing Ren}, {and} \bibinfo{person}{Jian Sun}.}
  \bibinfo{year}{2016}\natexlab{}.
\newblock \showarticletitle{Deep residual learning for image recognition}.
\newblock  (\bibinfo{year}{2016}), \bibinfo{pages}{770--778}.
\newblock


\bibitem[\protect\citeauthoryear{Hinton}{Hinton}{2010}]%
        {hinton2010practical}
\bibfield{author}{\bibinfo{person}{Geoffrey Hinton}.}
  \bibinfo{year}{2010}\natexlab{}.
\newblock \showarticletitle{A practical guide to training restricted Boltzmann
  machines}.
\newblock \bibinfo{journal}{{\em Momentum\/}} \bibinfo{volume}{9},
  \bibinfo{number}{1} (\bibinfo{year}{2010}), \bibinfo{pages}{926}.
\newblock


\bibitem[\protect\citeauthoryear{Iandola, Moskewicz, Ashraf, and
  Keutzer}{Iandola et~al\mbox{.}}{2016}]%
        {Iandola:2016in}
\bibfield{author}{\bibinfo{person}{Forrest~N Iandola},
  \bibinfo{person}{Matthew~W Moskewicz}, \bibinfo{person}{Khalid Ashraf}, {and}
  \bibinfo{person}{Kurt Keutzer}.} \bibinfo{year}{2016}\natexlab{}.
\newblock \showarticletitle{{FireCaffe - Near-Linear Acceleration of Deep
  Neural Network Training on Compute Clusters.}}
\newblock \bibinfo{journal}{{\em CVPR\/}} (\bibinfo{year}{2016}),
  \bibinfo{pages}{2592--2600}.
\newblock


\bibitem[\protect\citeauthoryear{Ioffe}{Ioffe}{2017}]%
        {Ioffe:2017uc}
\bibfield{author}{\bibinfo{person}{Sergey Ioffe}.}
  \bibinfo{year}{2017}\natexlab{}.
\newblock \showarticletitle{{Batch Renormalization: Towards Reducing Minibatch
  Dependence in Batch-Normalized Models}}.
\newblock \bibinfo{journal}{{\em arXiv.org\/}} (\bibinfo{date}{Feb.}
  \bibinfo{year}{2017}).
\newblock
\showeprint{1702.03275v1}


\bibitem[\protect\citeauthoryear{James, Olivier, Fr{\'e}d{\'e}ric, Pascal, and
  Razvan}{James et~al\mbox{.}}{}]%
        {James:0hLkbVip}
\bibfield{author}{\bibinfo{person}{B James}, \bibinfo{person}{B Olivier},
  \bibinfo{person}{B Fr{\'e}d{\'e}ric}, \bibinfo{person}{L Pascal}, {and}
  \bibinfo{person}{P Razvan}.}
\newblock \bibinfo{title}{{Theano: a CPU and GPU math expression compiler}}.
\newblock


\bibitem[\protect\citeauthoryear{Jia, Shelhamer, Donahue, Karayev, Long,
  Girshick, Guadarrama, and Darrell}{Jia et~al\mbox{.}}{2014}]%
        {caffe2014}
\bibfield{author}{\bibinfo{person}{Yangqing Jia}, \bibinfo{person}{Evan
  Shelhamer}, \bibinfo{person}{Jeff Donahue}, \bibinfo{person}{Sergey Karayev},
  \bibinfo{person}{Jonathan Long}, \bibinfo{person}{Ross Girshick},
  \bibinfo{person}{Sergio Guadarrama}, {and} \bibinfo{person}{Trevor Darrell}.}
  \bibinfo{year}{2014}\natexlab{}.
\newblock \showarticletitle{Caffe: Convolutional Architecture for Fast Feature
  Embedding}. In \bibinfo{booktitle}{{\em Proceedings of the 22Nd ACM
  International Conference on Multimedia}}. \bibinfo{pages}{675--678}.
\newblock


\bibitem[\protect\citeauthoryear{Krizhevsky}{Krizhevsky}{2014}]%
        {krizhevsky2014one}
\bibfield{author}{\bibinfo{person}{Alex Krizhevsky}.}
  \bibinfo{year}{2014}\natexlab{}.
\newblock \showarticletitle{One weird trick for parallelizing convolutional
  neural networks}.
\newblock \bibinfo{journal}{{\em arXiv preprint arXiv:1404.5997\/}}
  (\bibinfo{year}{2014}).
\newblock


\bibitem[\protect\citeauthoryear{Krizhevsky, Hinton, et~al\mbox{.}}{Krizhevsky
  et~al\mbox{.}}{2010}]%
        {Krizhevsky:2010va}
\bibfield{author}{\bibinfo{person}{Alex Krizhevsky},
  \bibinfo{person}{Geoffrey~E Hinton}, {and} \bibinfo{person}{others}.}
  \bibinfo{year}{2010}\natexlab{}.
\newblock \showarticletitle{Factored 3-way restricted boltzmann machines for
  modeling natural images}.
\newblock  (\bibinfo{year}{2010}), \bibinfo{pages}{621--628}.
\newblock


\bibitem[\protect\citeauthoryear{Krizhevsky, Sutskever, and Hinton}{Krizhevsky
  et~al\mbox{.}}{2012}]%
        {NIPS2012_alexnet}
\bibfield{author}{\bibinfo{person}{Alex Krizhevsky}, \bibinfo{person}{Ilya
  Sutskever}, {and} \bibinfo{person}{Geoffrey~E. Hinton}.}
  \bibinfo{year}{2012}\natexlab{}.
\newblock \showarticletitle{ImageNet Classification with Deep Convolutional
  Neural Networks}.
\newblock In \bibinfo{booktitle}{{\em Advances in Neural Information Processing
  Systems 25}}, \bibfield{editor}{\bibinfo{person}{F.~Pereira},
  \bibinfo{person}{C.~J.~C. Burges}, \bibinfo{person}{L.~Bottou}, {and}
  \bibinfo{person}{K.~Q. Weinberger}} (Eds.). \bibinfo{publisher}{Curran
  Associates, Inc.}, \bibinfo{pages}{1097--1105}.
\newblock
\showURL{%
\url{http://papers.nips.cc/paper/4824-imagenet-classification-with-deep-convolutional-neural-networks.pdf}}


\bibitem[\protect\citeauthoryear{Lavin and Gray}{Lavin and Gray}{2016}]%
        {lavin2016fast}
\bibfield{author}{\bibinfo{person}{Andrew Lavin} {and} \bibinfo{person}{Scott
  Gray}.} \bibinfo{year}{2016}\natexlab{}.
\newblock \showarticletitle{Fast algorithms for convolutional neural networks}.
  In \bibinfo{booktitle}{{\em Proceedings of the IEEE Conference on Computer
  Vision and Pattern Recognition}}. \bibinfo{pages}{4013--4021}.
\newblock


\bibitem[\protect\citeauthoryear{LeCun, Bottou, and Bengio}{LeCun
  et~al\mbox{.}}{1998}]%
        {LeCun:1998ek}
\bibfield{author}{\bibinfo{person}{Y LeCun}, \bibinfo{person}{L Bottou}, {and}
  \bibinfo{person}{Y Bengio}.} \bibinfo{year}{1998}\natexlab{}.
\newblock \showarticletitle{Gradient-based learning applied to document
  recognition}.
\newblock \bibinfo{journal}{{\it Proc. IEEE}} \bibinfo{volume}{86},
  \bibinfo{number}{11}, \bibinfo{pages}{2278--2324}.
\newblock


\bibitem[\protect\citeauthoryear{Li, Andersen, Park, Smola, Ahmed, Josifovski,
  Long, Shekita, and Su}{Li et~al\mbox{.}}{2014}]%
        {Li:2014tt}
\bibfield{author}{\bibinfo{person}{Mu Li}, \bibinfo{person}{David~G Andersen},
  \bibinfo{person}{Jun~Woo Park}, \bibinfo{person}{Alexander~J Smola},
  \bibinfo{person}{Amr Ahmed}, \bibinfo{person}{Vanja Josifovski},
  \bibinfo{person}{James Long}, \bibinfo{person}{Eugene~J Shekita}, {and}
  \bibinfo{person}{Bor-Yiing Su}.} \bibinfo{year}{2014}\natexlab{}.
\newblock \showarticletitle{{Scaling Distributed Machine Learning with the
  Parameter Server.}}
\newblock \bibinfo{journal}{{\em OSDI\/}} (\bibinfo{year}{2014}).
\newblock


\bibitem[\protect\citeauthoryear{Liu, Zhang, Chang, and Sun}{Liu
  et~al\mbox{.}}{2011}]%
        {PLDA+}
\bibfield{author}{\bibinfo{person}{Zhiyuan Liu}, \bibinfo{person}{Yuzhou
  Zhang}, \bibinfo{person}{Edward~Y. Chang}, {and} \bibinfo{person}{Maosong
  Sun}.} \bibinfo{year}{2011}\natexlab{}.
\newblock \showarticletitle{PLDA+: Parallel Latent Dirichlet Allocation with
  Data Placement and Pipeline Processing}.
\newblock \bibinfo{journal}{{\em ACM Trans. Intell. Syst. Technol.\/}}
  \bibinfo{volume}{2}, \bibinfo{number}{3}, Article \bibinfo{articleno}{26}
  (\bibinfo{date}{May} \bibinfo{year}{2011}), \bibinfo{numpages}{18}~pages.
\newblock
\showISSN{2157-6904}
\showDOI{%
\url{https://doi.org/10.1145/1961189.1961198}}


\bibitem[\protect\citeauthoryear{Mathieu, Henaff, and LeCun}{Mathieu
  et~al\mbox{.}}{2013}]%
        {Mathieu:2013wa}
\bibfield{author}{\bibinfo{person}{Michael Mathieu}, \bibinfo{person}{Mikael
  Henaff}, {and} \bibinfo{person}{Yann LeCun}.}
  \bibinfo{year}{2013}\natexlab{}.
\newblock \showarticletitle{{Fast Training of Convolutional Networks through
  FFTs.}}
\newblock \bibinfo{journal}{{\em CoRR abs/1312.5851\/}}
  \bibinfo{volume}{cs.CV} (\bibinfo{year}{2013}).
\newblock


\bibitem[\protect\citeauthoryear{Nemhauser and Wolsey}{Nemhauser and
  Wolsey}{1988}]%
        {nemhauser1988integer}
\bibfield{author}{\bibinfo{person}{George~L Nemhauser} {and}
  \bibinfo{person}{Laurence~A Wolsey}.} \bibinfo{year}{1988}\natexlab{}.
\newblock \showarticletitle{Integer programming and combinatorial
  optimization}.
\newblock \bibinfo{journal}{{\em Wiley, Chichester. GL Nemhauser, MWP
  Savelsbergh, GS Sigismondi (1992). Constraint Classification for Mixed
  Integer Programming Formulations. COAL Bulletin\/}}  \bibinfo{volume}{20}
  (\bibinfo{year}{1988}), \bibinfo{pages}{8--12}.
\newblock


\bibitem[\protect\citeauthoryear{Ng}{Ng}{2016}]%
        {andrew2016}
\bibfield{author}{\bibinfo{person}{Andrew~Y. Ng}.}
  \bibinfo{year}{2016}\natexlab{}.
\newblock \bibinfo{title}{The Nuts and Bolts of Machine Learning}.
  (\bibinfo{year}{2016}).
\newblock
\showURL{%
\url{https://nips.cc/Conferences/2010/Schedule?showEvent=1986}}
\newblock
\shownote{NIPS Workshop on Deep Learning and Unsupervised Feature Learning.}


\bibitem[\protect\citeauthoryear{Niu, Recht, R{\'e}, and Wright}{Niu
  et~al\mbox{.}}{2011}]%
        {niu2011lock}
\bibfield{author}{\bibinfo{person}{Feng Niu}, \bibinfo{person}{Benjamin Recht},
  \bibinfo{person}{Christopher R{\'e}}, {and} \bibinfo{person}{Stephen
  J~Hogwild Wright}.} \bibinfo{year}{2011}\natexlab{}.
\newblock \showarticletitle{A lock-free approach to parallelizing stochastic
  gradient descent. arXiv preprint}.
\newblock \bibinfo{journal}{{\em arXiv preprint arXiv:1106.5730\/}}
  (\bibinfo{year}{2011}).
\newblock


\bibitem[\protect\citeauthoryear{Polyak}{Polyak}{1964}]%
        {polyak1964some}
\bibfield{author}{\bibinfo{person}{Boris~T Polyak}.}
  \bibinfo{year}{1964}\natexlab{}.
\newblock \showarticletitle{Some methods of speeding up the convergence of
  iteration methods}.
\newblock \bibinfo{journal}{{\it U. S. S. R. Comput. Math. and Math. Phys.}}
  \bibinfo{volume}{4}, \bibinfo{number}{5} (\bibinfo{year}{1964}),
  \bibinfo{pages}{1--17}.
\newblock


\bibitem[\protect\citeauthoryear{Shi, Wang, Xu, and Chu}{Shi
  et~al\mbox{.}}{2016}]%
        {shi2016}
\bibfield{author}{\bibinfo{person}{Shaohuai Shi}, \bibinfo{person}{Qiang Wang},
  \bibinfo{person}{Pengfei Xu}, {and} \bibinfo{person}{Xiaowen Chu}.}
  \bibinfo{year}{2016}\natexlab{}.
\newblock \showarticletitle{{Benchmarking State-of-the-Art Deep Learning
  Software Tools}}.
\newblock \bibinfo{journal}{{\em arXiv.org\/}} (\bibinfo{date}{Aug.}
  \bibinfo{year}{2016}).
\newblock
\showeprint{1608.07249v5}


\bibitem[\protect\citeauthoryear{Szegedy, Ioffe, and Vanhoucke}{Szegedy
  et~al\mbox{.}}{2016}]%
        {inception-v4}
\bibfield{author}{\bibinfo{person}{Christian Szegedy}, \bibinfo{person}{Sergey
  Ioffe}, {and} \bibinfo{person}{Vincent Vanhoucke}.}
  \bibinfo{year}{2016}\natexlab{}.
\newblock \showarticletitle{Inception-v4, Inception-ResNet and the Impact of
  Residual Connections on Learning}.
\newblock \bibinfo{journal}{{\em CoRR\/}}  \bibinfo{volume}{abs/1602.07261}
  (\bibinfo{year}{2016}).
\newblock
\showURL{%
\url{http://arxiv.org/abs/1602.07261}}


\bibitem[\protect\citeauthoryear{Vasilache, Johnson, Mathieu, Chintala,
  Piantino, and LeCun}{Vasilache et~al\mbox{.}}{2014}]%
        {vasilache2014fast}
\bibfield{author}{\bibinfo{person}{Nicolas Vasilache}, \bibinfo{person}{Jeff
  Johnson}, \bibinfo{person}{Michael Mathieu}, \bibinfo{person}{Soumith
  Chintala}, \bibinfo{person}{Serkan Piantino}, {and} \bibinfo{person}{Yann
  LeCun}.} \bibinfo{year}{2014}\natexlab{}.
\newblock \showarticletitle{Fast convolutional nets with fbfft: A GPU
  performance evaluation}.
\newblock \bibinfo{journal}{{\em arXiv preprint arXiv:1412.7580\/}}
  (\bibinfo{year}{2014}).
\newblock


\bibitem[\protect\citeauthoryear{Zaremba, Sutskever, and Vinyals}{Zaremba
  et~al\mbox{.}}{2014}]%
        {Zaremba:2014up}
\bibfield{author}{\bibinfo{person}{Wojciech Zaremba}, \bibinfo{person}{Ilya
  Sutskever}, {and} \bibinfo{person}{Oriol Vinyals}.}
  \bibinfo{year}{2014}\natexlab{}.
\newblock \showarticletitle{{Recurrent Neural Network Regularization}}.
\newblock \bibinfo{journal}{{\em arXiv.org\/}} (\bibinfo{date}{Sept.}
  \bibinfo{year}{2014}).
\newblock
\showeprint{1409.2329v5}


\bibitem[\protect\citeauthoryear{Zhang, Hu, Wei, Xie, Kim, Ho, and Xing}{Zhang
  et~al\mbox{.}}{2015}]%
        {zhang2015poseidon}
\bibfield{author}{\bibinfo{person}{Hao Zhang}, \bibinfo{person}{Zhiting Hu},
  \bibinfo{person}{Jinliang Wei}, \bibinfo{person}{Pengtao Xie},
  \bibinfo{person}{Gunhee Kim}, \bibinfo{person}{Qirong Ho}, {and}
  \bibinfo{person}{Eric Xing}.} \bibinfo{year}{2015}\natexlab{}.
\newblock \showarticletitle{Poseidon: A system architecture for efficient
  gpu-based deep learning on multiple machines}.
\newblock \bibinfo{journal}{{\em arXiv preprint arXiv:1512.06216\/}}
  (\bibinfo{year}{2015}).
\newblock


\bibitem[\protect\citeauthoryear{Zheng, Jiang, Wu, and Chang}{Zheng
  et~al\mbox{.}}{2015}]%
        {Anonymous:NXi9MKDQ}
\bibfield{author}{\bibinfo{person}{Z Zheng}, \bibinfo{person}{W Jiang},
  \bibinfo{person}{G Wu}, {and} \bibinfo{person}{E~Y Chang}.}
  \bibinfo{year}{2015}\natexlab{}.
\newblock \showarticletitle{{SpeeDO: Parallelizing stochastic gradient descent
  for deep convolutional neural network}}.
\newblock \bibinfo{journal}{{\em NIPS Workshop on Learning Systems\/}}
  (\bibinfo{year}{2015}).
\newblock


\bibitem[\protect\citeauthoryear{Zinkevich, Weimer, Li, and Smola}{Zinkevich
  et~al\mbox{.}}{2010}]%
        {zinkevich2010parallelized}
\bibfield{author}{\bibinfo{person}{Martin Zinkevich}, \bibinfo{person}{Markus
  Weimer}, \bibinfo{person}{Lihong Li}, {and} \bibinfo{person}{Alex~J Smola}.}
  \bibinfo{year}{2010}\natexlab{}.
\newblock \showarticletitle{Parallelized stochastic gradient descent}.
\newblock  (\bibinfo{year}{2010}), \bibinfo{pages}{2595--2603}.
\newblock


\end{thebibliography}

\end{document}